\newtheorem{theorem}{Theorem}
\newtheorem{definition}{Definition}
\newtheorem{assumption}{Assumption}
\newtheorem{corollary}{Corollary}
\begin{document}

\title{Rollerchain, a Blockchain With Safely Pruneable Full Blocks}

\author{\IEEEauthorblockN{Alexander Chepurnoy}
\IEEEauthorblockA{IOHK Research\\
Sestroretsk, Russia\\
Email: alex.chepurnoy@iohk.io}
\and
\IEEEauthorblockN{Mario Larangeira}
\IEEEauthorblockA{IOHK Research\\
Tokyo, Japan\\
Email: mario.larangeira@iohk.io}
\and
\IEEEauthorblockN{Alexander Ojiganov}
\IEEEauthorblockA{Saint-Petersburg National Research University of\\
 Information Technologies, Mechanics and Optics\\
Saint-Petersburg, Russia}}

\maketitle

\onecolumn

\begin{abstract}

Bitcoin~\cite{Nakamoto2008} is the first successful decentralized global digital cash system. Its mining process requires intense computational resources, therefore its usefulness remains a disputable topic. We aim to solve three problems with Bitcoin and other blockchain systems of today by repurposing their work. First, space to store a blockchain is growing linearly with number of transactions. Second, a honest node is forced to be irrational regarding storing full blocks by a way implementations are done. Third, 
a trustless bootstrapping process for a new node involves downloading and processing all the transactions ever written into a blockchain.  

In this paper we present a new consensus protocol for Bitcoin-like peer-to-peer systems where a right to generate a block is given to a party providing non-interactive proofs of storing a subset of the past state snapshots. Unlike the blockchain systems in use today, a network using our protocol is safe if the nodes prune full blocks not needed for mining. 

We extend the GKL model~\cite{garay2015bitcoin} to describe our Proof-of-Work scheme and a transactional model modifications needed for it. We provide a detailed analysis of our protocol and proofs of its security.

\end{abstract}

\IEEEpeerreviewmaketitle

\section{Introduction}
\label{intr_section}

The Bitcoin whitepaper~\cite{Nakamoto2008} defines a way a common ledger could be maintained within a trustless peer-to-peer network by using moderately hard computational puzzles~\cite{milleranonymous} and the blockchain datastructure. Since then a lot of research has been done about Bitcoin and alternative systems. Nevertheless, there are still many open problems in the field, and performance is one of the most crucial~\cite{cromanscaling}. Another threat to the blockchain-based systems of today is the lack of reward for any activity supporting the network other than the block generator self-election via finding a partial hash collision. In particular, there is no reason for the nodes in the network to store all the blocks since the genesis until the few newer ones. On other side, it is not known how to build a safe network if its participants are going to be rational about storing the full blocks.

In this paper we present a consensus protocol alternative to the Proof-of-Work used in Bitcoin. In our protocol a participant is rewarded for archiving few state snapshots amongst \(n\) states a network aims to store collectively. That is, if a miner is storing a state snapshot for height \(h\)\footnote{Here we use the regular meaning for the word {\it height}, that is it is the number of blocks of the blockchain from its genesis to the given block}, then when a new block appears she will need to replace the snapshot with a new one corresponding to height \(h+1\). Thus a miner needs to store some number of full blocks also.

\subsection{The Consensus and the Mining Lottery}

The Bitcoin blockchain is generated in a worldwide peer-to-peer network without any central authority. In such an environment, the next block determination process requires a protection against Sybil attacks~\cite{douceur2002sybil}. 

The Proof-of-Work mining process~\cite{Nakamoto2008} eliminates Sybil attacks and also enforces a rational miner to choose a single version of the history out of possible options. 

The Proof-of-Work mining process could be seen as a lottery as described in~\cite{miller2014permacoin}. A mining software iterates over changeable block field values until it finds a solution, that is, a block whose header satisfies the predicate: \( hash(blockheader) < difficulty \)\footnote{In general ``hash'' means a cryptographic hash function, whereas ``difficulty'' is defined analogously as in the Bitcoin protocol jargon.}, where \textit{difficulty} is a publicly known value. The iterations could not be precomputed because of the epoch-dependent\footnote{Here, as in~\cite{miller2014permacoin}, we consider the {\it epoch} as the interval between the discovery of blocks.} unique value which is known to all the participants and stored into a blockheader. 

\subsection{The Blockchain Storing Rationale}
\label{rationale}

In the Bitcoin Core, the reference implementation of the Bitcoin protocol~\cite{bitcoind}, a node stores all the blocks with all the transactions since the genesis block. In addition, the current state snapshot is also stored in the form of unspent transaction outputs set. 

The main benefit for the network in storing all the full blocks is that the new nodes have to download them since the genesis and re-validate all the transactions ever included in the blocks. Therefore, in order for the blockchain holder to be useful for the network, i.e., perform an  altruistic activity, it has to  store all the full blocks. 

A \textit{rational} participant can still have all the full node merits without  keeping all the blocks except the last few ones which are needed to handle possible forks~(a probability for a blockchain suffix to be reversed is going down exponentially with a suffix length). To be rational it is needed to switch to an implementation of the Bitcoin protocol which does not hold the unnecessary data~(such an implementation of a full node does not exist to the best of our knowledge). 

However, if all the nodes in the network are rational, \textit{tragedy of commons}~\cite{hardin1968commons} happens: nobody is storing full blocks except last few ones. The only way to get into the network then is to download the current state snapshot and trust it. 

\subsection{The Motivation For a New Protocol}
\label{section_motiv}

Next, we detail the four main points which our protocol addresses.

\subsubsection{Incentives to Keep a Full Node}

 The only rewarded activity in Bitcoin is the iteration over values for certain fields in a block header. If mining software includes transactions into the block, it needs to validate them, which can be carried by presenting the current state snapshot. Storing any number of full blocks is not needed in order to mine. Our particular goal is to develop a protocol providing incentives to run a node storing sufficiently enough number of full blocks for network viability, thus making the network safer in the long run. 

\subsubsection{Solving The Blockchain Storing Rationale Problem}

As the growth of a blockchain is not bounded, in the long run the network can survive only if storage and processing power of an ordinary computer grows not slower than requirements of a blockchain system. Storing enough blocks is the activity with no reward, so eventually most of the full nodes will purge blocks from their disks except the last ones having non-negligible probability to be rolled back. This means a system where blocks are needed but not stored due to a practical impossibility. Our protocol rewards the collectively storage of number of blocks sufficiently enough for network safety. 

\subsubsection{A Prunable Blockchain}

A solution to the storing rationale problem would be a protocol for which an archive of some number of blocks and states are required in order to mine new blocks. This number \(n\) could be sufficiently large to assume any rollback caused by a fork could be of negligible depth in comparison with it. For example, \(n = 10,000\) blocks gives approximately 70 days of history to be stored if a block is generated every \(10\) minutes on average. Any rollback possible is of negligible depth in comparison with that. 

In such a system we can assume that the overwhelming majority of the nodes (except tracking tools and other special applications) to be rational so to remove blocks not needed for mining having a freedom of choice given. A rational node has predictable storage resources consumption. A network of rational nodes is also sending less blocks, and bandwidth being saved could be repurposed for other tasks improving overall system performance.

\subsubsection{Fast Trustless Bootstrap}

A new full node in Bitcoin needs to download and process all the blocks since the genesis block. This results into an unreasonably long and resource-consumptive initial processing phase. In order to reduce the burn of this phase, a trusted Bitcoin state snapshot generated by a notable community member could be downloaded~\cite{snapshot}. This approach reinstates trust-related issues solved by Bitcoin. We want to provide fast, trustless and safe bootstrapping with the help of our consensus protocol.

\subsection{The Bitcoin Backbone Protocol}
\label{gkl_general}

There are few models describing Bitcoin consensus protocols. Simpler ones from~\cite{miller2014permacoin, milleranonymous} introduce a notion of a \textit{ticket}, which is the one iteration of a Proof-of-Work function. Comprehensive GKL model from~\cite{garay2015bitcoin} derives properties of the Bitcoin protocol as well as ledgers built on top of it while not using a notion of ticket. We will use the latter work to reason about our protocol. 

The GKL model~(also the Bitcoin Backbone protocol) relies on the standard choice for the description of distributed models~\cite{Canetti:2001} which we now describe.

 Every party  $P$ in the network is modeled as an Interactive Turing Machine (ITM) which has access to two tapes named $INPUT()$ and $RECEIVE()$, the {\it input} and {\it communication tapes}. The interaction of the parties are controlled by the environment entity $\mathcal{Z}$: an ITM which provides the contents of the $INPUT()$ tape of each party. It also defines the {\it rounds} of the system. At each round the parties are allowed to write and read from its tapes as well as perform computation. We also assume the existence of an operation $BROADCAST$ which allows the parties to send messages atomically to all other parties. Furthermore, each party is allowed to execute at most $q$ hash queries to a regular cryptographic hash function denoted by $hash$.  

\paragraph{\bf The adversarial and messaging framework} 
The adversary $\mathcal{A}$ is also an ITM. We assume it is ``adaptive'' which means it can corrupt honest parties $P_{i}$ which are performing the protocol, say  $\Pi$. It also has access to at most $q$ queries to the hash function, however which can be added to the queries from the corrupted parties. 

The messages exchanged between the parties $P_{i}$ can be intercepted by $\mathcal{A}$. Moreover, the adversary can change the origin of the message, however it cannot delay the delivery of it nor change its contents (exception is the origin, as stated earlier). Hence the model guarantees the delivery of the messages on the next round. In other words, there is no {\it delay} on the delivery of the messages. This feature characterizes the model as {\it synchronous}, as opposite to the {\it asynchronous model} when the adversary can add a delay on the delivery of messages. Given the  upper bound $q$ hash queries for each party, we denote the model $q${\it -bounded synchronous model}.

\paragraph{\bf The Execution of $\mathbf{\Pi}$}
The execution of the protocol $\Pi$ is captured by the view of the environment $\mathcal{Z}$ is denoted by $\mathsf{VIEW}_{\Pi,\mathcal{A},\mathcal{Z}}(k,q,z)$ which is the concatenation of the views of each  $P_{i}$ performing the protocol $\Pi$. That is $p$ random variable ensembles\footnote{Here we consider an unknown, for the parties, number of $p$ participants, where $p$ is a fixed value.} $\{\mathsf{VIEW}_{\Pi,\mathcal{A},\mathcal{Z}}^{P_{i}}(k,q,z)\}_{k\in\mathbb{N},z\in\{0,1\}^\ast}$, for the security parameter $k$ and auxiliary input $z$.

For a concrete ledger to be cast into the GKL Model it is necessary to provide a construction for three functions. They are:
\begin{itemize}
\item Content Validation Predicate  $V(\cdot)$
\item Chain Reading Function $R(\cdot)$
\item Input Contribution Function $I(\cdot)$
\end{itemize}

We will define the three functions to some degree needed for our Proof-of-Work scheme. Then concrete blockchain system~(Bitcoin, Namecoin, Ethereum and so on) can extend our definitions to complete its design. 

\subsection{Rollerchain}

We shape some properties of ledger semantics and Proof-of-Work scheme built with respect to them. The framework allows to achieve goals claimed in the section~\ref{section_motiv} via rewarding miners to store collectively a rolling window of state snapshots and full blocks~(thus the name \textit{Rollerchain}).

\section{Transactional Model}
\label{tx_model}

Our consensus protocol requires for a ledger with some properties. In order to formally define the properties we are extending the Bitcoin backbone protocol described in general in section~\ref{gkl_general}. Unlike Bitcoin, we are adding an authenticating value for a whole state to a block. There are discussions in the Bitcoin community about implementing that~(the earliest found discussion was started by Andrew Miller back in 2012~\cite{utxoset}), but no concrete plans exist at the moment to the best of our knowledge.

\subsection{The Blockchain and The State}

A blockchain could be seen as a linked list where an element (a \textit{block}) is a tuple \((\Delta S_i, \Delta C_i)\), where \(\Delta S_i\) is a \textit{transactional state modifier} and \(\Delta C_i\) is a \textit{consensus state modifier}. The tuple of state modifiers could be applied to a state which results another state whenever the modifiers are valid. We denote a state modifier application by \(\diamond\), then \((S_i,C_i) = (S_{i-1},C_{i-1}) \diamond (\Delta S_i, \Delta C_i) \). Every network participant knows the predefined  transactional and consensus state \((S_1, C_1)\) resulted from the \textit{genesis block}. Then each participant knows exactly the same state \((S_i,C_i) = (((S_1, C_1) \diamond (\Delta S_2, \Delta C_2)) \diamond \dots \diamond (\Delta S_i, \Delta C_i)) \) if all the blocks are the same. The order of modifiers is defined via immutable  link from each modifier to the previous one, where the first modifier in the history must be linked to the genesis state. For a block \((\Delta S_i, \Delta C_i)\) we denote by \(i\) the value for the \textit{height} of the block.

The consensus state modifier changes the rules on the block validation which are not related to the transactions stored in it (for example, it contains the \textit{difficulty} value  which is explained in the next sections). A transactional state modifier being, atomic in terms of its application, itself contains a sequence of transactions. Furthermore, given the height \(i\), we denote by \textit{state snapshot} or just \textit{state} the transactional state \(S_i\).

\subsection{The Fixed State Representation}
\label{fixed_rep}

A state representation is not fixed by the Bitcoin protocol. A full node implementation usually stores a set of unspent outputs and also some node-specific additional information. By applying valid transactions from a new block, a node software takes unspent outputs out of the set and puts there outputs from the transactions in the new block~\cite{Nakamoto2008}. 

Abstracting the Bitcoin-like model, a state could be represented as a set of \textit{closed boxes} of size \(n_S\). Each box has a value associated with it. A transaction contains openers for \(n_k\) boxes and also creates \(n_b\) new closed boxes. The resulting state set has the size of \(n_S-n_k+n_b\) after applying the transaction to it. 

Each box has some unique identifier \(id(box)\) thus the state could be represented as dictionary \((id(box) \rightarrow box)\). We require the dictionary to be \textit{authenticated} and corresponding one-way digest to be included into a blockheader. Note that our construction of a state representation is fixed and is a part of the protocol, unlike Bitcoin.

We use the term \textit{box} and not \textit{output} because the latter is used not in all the blockchain systems. For example, Ethereum\cite{ethyp} is using notion of mutable \textit{accounts} instead of immutable outputs, but we still can get a box from an account and its state in order to build an explicit state using boxes.

\subsection{An Authenticated Dictionary}

We represent the state in the form of \(id(box) \rightarrow box\) correspondences, and an authenticated dictionary~\cite{nissim1998certificate, kocher1998certificate} is to be built upon them. Different authenticated dictionary implementations are known: sparse Merkle trees~\cite{sparsemerkle}, treaps~\cite{crosby2011authenticated}, skiplists~\cite{anagnostopoulos2001persistent}, balanced trees~\cite{nissim1998certificate}, tuple-based solutions~\cite{crosby2011authenticated}. We do not specify a concrete implementation for an authenticated dictionary but require the following properties to be hold:

\begin{itemize}
\item \textbf{Root authenticator.} A single fixed-size value commits the entire dictionary.
\item \textbf{Set-uniqueness.} A dictionary with given set of keys has a unique and canonical representation.
\item \textbf{Efficiency.} The proof returned for a lookup request should be has a size sublinear to dictionary size.
\item \textbf{Non-membership proofs mentioning set member ids.} Our protocol will generate uniform ids so in most cases it will be no element in the set with a key given. Thus we need for proofs of non-membership. As we are going to include a box into a blockheader, we require a non-membership proof to mention a member id or ids. We assume there is a function \(member(\pi)\) which extracts in a deterministic way an id of a member of the dictionary from a proof, whether it is a proof of membership or a proof of non-membership.
\end{itemize}

An authenticated dictionary must provide a support for following operations:

\begin{itemize}
\item{\(root(\mathcal{D})\)} calculates an authenticating value for a dictionary \(\mathcal{D}\).
\item{\(checkRoot(\mathcal{D}, a_\mathcal{D})\)} checks whether \(a_\mathcal{D}\) is a correct authenticating value for a dictionary \(\mathcal{D}\).
\item{\(generate(\mathcal{D}, i\d)\)} generates a proof \(\pi\) of (non-)~membership for an identifier \(id\) and dictionary \(\mathcal{D}\).
\item{\(checkPath(\mathcal{D}, id, \pi)\)} checks whether a proof \(\pi\) is valid for an identifier \(id\) and dictionary \(\mathcal{D}\).
\item{\(member(\mathcal{D}, \pi)\)} returns deterministically defined element presenting in dictionary \(\mathcal{D}\) given a proof of (non-)~membership \(\pi\).
\end{itemize}

Note we define operation for a dictionary \(\mathcal{D}\), but it also possible to use a set of uniquely identifiable objects instead in an every operation. In this case we first extract an identifier for each object getting a dictionary as a result of this transformation, then we apply an operation to the dictionary. For example, we will write \(root(\tau)\), where \(\tau\) is a set of transactions minding the transformation to be done before the operation.

\subsection{Block Header}

Our idea to reduce storage requirements based on a notion of a \textit{block header}: 

\begin{definition}
A block header contains parts of a block enough to check its authenticity and whether a valid amount of work has been spent to generate it. In Rollerchain, block header is $\langle s, t, root(\mathcal{S}), root(\tau)\rangle$.
\end{definition}

In order to build a safe system we need full nodes to store all the block headers since genesis, thus the following assumption: 

\begin{assumption}
Throughout the paper we assume a rational full node can tolerate storing all the block headers since genesis. In the same time it prunes full blocks not needed for selfish purposes anymore to just block headers.
\end{assumption}

We argue the assumption is reasonable. As of August, 2016, a block header in Bitcoin is about just 80 bytes while a full block is about 1 megabyte. For 1 million block headers~(about 19 years of Bitcoin history), block headers fit into 80 megabytes while full blocks will consume 1 terabyte of disk space. 

\subsection{Refined Transactional Ledger Model}

The protocol parties, called \textit{miners}, process sequences of transactions \(\tau = tx_1 \dots tx_e\). A transaction contains identifiers of boxes to remove from a state along with openers and boxes to append: \(tx = \langle (id(box_{r_1}) \rightarrow opener_1, \dots, id(box_{r_k}) \rightarrow opener_k), (id(box_{a_1}) \rightarrow box_{a_1} , \dots , id(box_{a_j}) \rightarrow box_{a_j}) \rangle \). A transaction is valid against a state which is a set of boxes if it removes boxes presenting in the state with valid openers and append boxes not presenting in the state. Transactions as well a state after applying them are supposed to be incorporated into their local chain \(\mathcal{C}\). The input inserted at each block of the chain \(\mathcal{C}\) is the whole state along with its integrity proof \(a_\mathcal{S} = root(\mathcal{S})\) and transactions along with the integrity proof for them \(a_{\tau} = root(\tau)\) \((\mathcal{S}, a_\mathcal{S}, \tau, a_{\tau})\). Thus, a chain \(\mathcal{C}\) contains the vector \(x_\mathcal{C} = \langle (\mathcal{S}_1, a_{\mathcal{S}_1}, \tau_1, a_{\tau_1}),\dots,(\mathcal{S}_m, a_{\mathcal{S}_m}, \tau_m, a_{\tau_m})\rangle\).

Next, we define functions $ValidateBlock$~(to check validity of \(a_\mathcal{S}, \tau, a_\tau\) values from a block against a state \(\mathcal{S}_p\) previous to the block) and $\diamond$~(to apply a set of transactions $\tau$ to a state $\mathcal{S}$ getting an updated state as result).

\begin{algorithm}[H]
\caption{Block validation function, parametrized with constant block reward value \(constReward\)}
\label{algo_validate_block}
\begin{algorithmic}[1]
\Function{ValidateBlock}{ $\mathcal{S}_p$, $a_\mathcal{S}$, $\tau$, $a_{\tau}$ } 

\If {\(\tau\) is empty}
\Return false
\EndIf

\State \( tx_{coinbase} \gets head(\tau) \)

\If { $tx_{coinbase}$ creates more than 1 box or opens any box}
\Return false
\EndIf

\State $ box_{coinbase} \gets $ the only box of $ tx_{coinbase} $

\State $ value_{coinbase} \gets $ value of $ box_{coinbase} $

\State $ fee_{total} \gets 0 $

\State $ boxes_{new} \gets [box_{coinbase}] $

\State \(S^{\prime} \gets \mathcal{S}_p\)

\For{\textbf{each} transaction \(tx\) in \(tail(\tau)\)}
\State $ fee_{tx} \gets 0 $
\For{\textbf{each} box to remove \(box_r\) and its \(opener\) in \(tx\)}
\If{a a box with identifier \(id(box_r)\) is not in \(\mathcal{S}^{\prime}\) \textbf{or} $opener$ invalid}
\Return false
\EndIf

\State $ fee_{tx} \gets fee_{tx} - $ value of $box_r$

\State remove \(box_r\) from \(\mathcal{S}^{\prime}\)
\EndFor

\State $ fee_{tx} \gets fee_{tx} + $ sum of values of new boxes in \(tx\) 

\If{ \(fee_{tx} < 0\) }
\Return false
\EndIf

\State $ fee_{total} := fee_{total} + fee_{tx} $
\State add new boxes from $tx$ to $boxes_{new}$
\EndFor

\If{\(value_{coinbase} \neq fee_{total} + constReward\)}
\Return false
\EndIf

\State add \(boxes_{new}\) to \(\mathcal{S}^{\prime}\)

\State \Return \( checkRoot(\mathcal{S}^{\prime}, a_\mathcal{S}) \land checkRoot(\tau, a_{\tau}) \)

\EndFunction
\end{algorithmic}
\end{algorithm}

\begin{algorithm}[H]
\caption{Block application function \(\diamond\).}
\label{algo_apply_block}
\begin{algorithmic}[1]

\Function{ \(\diamond\) }{  $\mathcal{S}$, $\tau$  }

\State \(\mathcal{S}^{\prime} \gets \mathcal{S}\)

\For{\textbf{each} transaction \(t\) in \(\tau\)}

\For{\textbf{each} \(id(box_r)\) in t}
\State Remove \(box_r\) from \(\mathcal{S}^{\prime}\)
\EndFor

\For{\textbf{each} \(box_a\) in t}
\State Append \(box_a\) to \(\mathcal{S}^{\prime}\)
\EndFor

\EndFor

\State \Return \( \mathcal{S}^{\prime} \)

\EndFunction

\end{algorithmic}
\end{algorithm}

With the help of the functions defined previously we can now describe semantics of $V(\cdot), I(\cdot), R(\cdot)$ functions of the GKL model defined in the section~\ref{gkl_general}.

\begin{table}[H]
\caption{Box operations log protocol, built on the Bitcoin backbone.}
\label{operations}
\begin{tabular}{ |l|p{10cm}| }  
  \hline
  Content validation predicate \(V(\cdot)\) & \(V (\langle x_1, ..., x_m \rangle) \) is true if and only if for every \( x_i = (a_{\mathcal{S}_i}, \tau_i, a_{\tau_i}), i > 1, ValidateBlock(\mathcal{S}_{i-1}, a_{\mathcal{S}_i}, \tau_i, a_{\tau_i}) = true \), \(\mathcal{S}_i = \mathcal{S}_{i-1} \diamond \tau_i\), and \(\mathcal{S}_1\) is known valid genesis state. \\
  \hline
  Chain reading function \(R(\cdot)\) & If \(V (\langle x_1, ..., x_m \rangle) = True \), the value \(R(\mathcal{C})\) is equal to \(\langle x_1, ..., x_m \rangle\); undefined otherwise. \\
  \hline
  Input contribution function \(I(\cdot)\) &  \(I(\mathcal{C},round,Input())\) operates in the following way: if the input tape contains \((Insert,v)\), it parses \(v\) as a sequence of transactions and retains the largest subsequence \(\tau \preceq v\) that is valid with respect to the current state \(\mathcal{S}_c\) from a last block in \(\mathcal{C}\), and then \(\mathcal{S} = \mathcal{S}_c \diamond \tau, x = (root(\mathcal{S}), \tau, root(\tau)) \). \\
  \hline
\end{tabular}
\end{table}

A concrete blockchain system to be built on top of the Rollerchain needs to specify box semantics and authenticated dictionary implementation. 

\section{The Protocol}
\label{protocol}

Our protocol is designed to create an incentive for the miners to store collectively the last \(n\) states and blocks, where each miner stores at least \(k\) states and also \(\frac{k\cdot n}{k+1}\) blocks on average in order to generate a block. Before detailing our construction, it is convenient to introduce the notation that will be used from this point.

\subsection{The Notation}

We denote by \textit{hash}  a regular cryptographic hash function with a uniformly distributed output. Furthermore, given two strings \( z\) and \( w\) we denote by \( z || w \) the string which results from the concatenation of \( z\) and \( w\).

We assume that the mining rewards could be given to an owner of a public key \(pk\). Furthermore, we assume the existence of a signature scheme and every party owns a public key \(pk\). 

Function $last(\mathcal{C})$ returns last generated block from a chain $\mathcal{C}$.

\subsection{The Setup}

Consider a party wants to be a miner. In the first place she is generating her public key \(pk\) and choosing state snapshots to store based on the public key. The set of states is defined by the function \(ChooseSnapshots\) defined below. 

\begin{algorithm}[H]
\label{alg_snapshots}
\caption{Snapshots extraction function \(ChooseSnapshots\).}
\label{algo_choose_snapshots}
\begin{algorithmic}[1]

\Function{ ChooseSnapshots }{ $\mathcal{C}$, $pk$ }

\State $h_c \gets $ length of $\mathcal{C}$

\State $\mathcal{S}_{pk} \gets []$

\For{\textbf{each} \(i\) in \(1 \dots k\)}

\( h \gets (\mbox{\textit{hash}}(pk || i) \bmod{n}) + (h_c-n) \) 

\State add state corresponding to block $\mathcal{C}[h]$ to $\mathcal{S}_{pk}$ if \(h > 0\), genesis state otherwise (it could happen if \(h_c < n\))

\EndFor

\State \Return $\mathcal{S}_{pk}$

\EndFunction

\end{algorithmic}
\end{algorithm}

All the \(\mathcal{S}_{pk}\) values must be unique, otherwise no valid block could be generated. We enforce such a requirement to prevent malicious iteration over the public key space to find as much repeating states as possible. Thus it is not possible to mine with any public key.  When a new block at height \(h_c + 1\) arrives, Alice needs to recalculate at least \(k\) states. Thus Alice must store blocks since minimal height of \(\mathcal{S}_{pk}\) also.

[TODO: draw pic]

\subsection{The Ticket Generation}

Next, we describe a process of generating an object from \(k\) state snapshots defined by the $ChooseSnapshots$ function we name a \textit{ticket} following the Permacoin paper~\cite{miller2014permacoin}. Ticket consists of \(k\) boxes~(one from each of the states) along with proofs against state authenticating values. The function $GenTicket$ to generate a ticket is getting current blockchain $\mathcal{C}$, miner's public key $pk$, unpredictable seed value $s_t$ and a nonce $ctr$ to be increased on an each call with the same $(\mathcal{C}, s_t, pk)$ values.

\begin{algorithm}[H]
\caption{Ticket generation function \(GenTicket\).}
\label{algo_ticket_generation}
\begin{algorithmic}[1]

\Function{ GenTicket }{ $\mathcal{C}$, $s_t$, $pk$, $ctr$ }

\State $seed \gets ctr$
\State \(\mathcal{S}[1 \dots k] \gets ChooseSnapshots(\mathcal{C}, pk) \)

\For{\textbf{each} \(i\) in \(1 \dots k\)}
\State \( id_i \gets \mbox{\textit{hash}}(seed || pk || s_t) \)
\State \( \pi_i \gets generate(\mathcal{S}_i, id_i) \)
\State \( box_i \gets member(\mathcal{S}_i, \pi_i) \)
\State \( a_{\mathcal{S}_i} \gets root(\mathcal{S}_i) \)
\State \( seed \gets id_i \)
\EndFor

\State \(t \gets \langle (pk \rightarrow ctr), \forall i \in {1 \dots k}, (id_i \rightarrow (a_{\mathcal{S}_i}, \pi_i, box_i) \rangle \)
\State \(a_t = root(t)\)
\State \Return \( a_t \rightarrow t \) 
\EndFunction

\end{algorithmic}
\end{algorithm}

\subsection{Proof-of-Work Function}

We modify the Proof-of-Work function of the GKL model~\cite{garay2015bitcoin} by using \(GenTicket\) function defined above, and also we are explicitly adding miner's public key \(pk\) as an argument of the Proof-of-Work function.

\begin{algorithm}[H]
\caption{Rollerchain's Proof-of-Work function, parametrized by \(q, D\). The input is \((x; \mathcal{C}; pk)\).}
\label{algo_rc_pow}
\begin{algorithmic}[1]
\Function{RollerPow}{$x$, $\mathcal{C}$, $pk$}
\State $ \langle a_{\tau}, a_{\mathcal{S}}, \tau \rangle \gets x $
\If{$C = \varepsilon$}
	\State $s \gets 0$
\Else
	\State $ \langle s^\prime, {a_t}^\prime, a_{\tau}^\prime, a_{\mathcal{S}}^\prime, ctr^\prime \rangle \gets head(\mathcal{C}) $
	\State $s \gets hash(ctr^\prime, hash(s^\prime, t^\prime, a_{\tau}^\prime, a_{\mathcal{S}}^\prime) $
\EndIf
\State $ctr \gets 1 $
\State $B \gets \varepsilon$

\While{$(ctr \leq q)$}
\State $s_t \gets hash(s || a_\mathcal{S} || a_\tau)$
\State $\langle a_t, t \rangle \gets GenTicket(C, s_t, pk, ctr) $
\State $h \gets hash(s, a_t, a_{\tau}, a_{\mathcal{S}}) $
\If{$hash(ctr, h) < D$}
\State $B \gets \langle s, t, x, ctr \rangle$
\State \textbf{break}
\EndIf
\State $ctr \gets ctr + 1$
\EndWhile
\State $\mathcal{C} \gets \mathcal{C}B$
\State \Return $\mathcal{C}$
\EndFunction
\end{algorithmic}
\end{algorithm}

Note, in line 12 we are generating a seed value for a \(GenTicket\) procedure as $hash(s || a_\mathcal{S} || a_\tau)$. It is made to avoid an optimization when a miner could generate a ticket once and then iterate over $a_\tau$ and $a_\mathcal{S}$.

\subsection{Protocol Notes}

With our consensus protocol, block header becomes $\langle s, a_t, a_\tau, a_\mathcal{S}, ctr \rangle$, a full block consists of a block header plus full ticket $t$ and transaction set $\tau$.

In order to check block validity one needs, in addition to all the checks described before in this paper, one needs to replay $GenTicket()$ code with boxes and proofs of their authenticity given and $checkPath()$ instead of $generate()$ and then $member()$.

\section{Discussion of The Protocol}
\label{discussion}

In this section we analyze the properties of our protocol and also relate our construction with potential issues.

\subsection{The Bootstrapping Process}

In blockchain systems of today, a bootstrapping process for a new full node is as follows:

\begin{enumerate}
\item A node knows genesis state.
\item A node downloads all the full blocks and apply them. 
\end{enumerate}

We propose to use following light bootstrapping algorithm instead of the classic one:

\begin{enumerate}
\item A node knows genesis state.
\item A node downloads block headers, check a chain Proof-of-Work validity. 
\item A node asks peers for available states. 
\item A node downloads a state $\mathcal{S_i}$ for the height $i$ from available options.
\item A node downloads full blocks since $i$ and apply them. 
\end{enumerate}

Bitcoin at this moment has about 40 million unspent outputs and 160 million transactions, and a size of an output is by an order of magnitude smaller than a size of transaction. Thus the light bootstrapping allows to reduce network traffic during downloading chain prefix without transactions and also eliminate transactions validation. The latter now takes tens of hours on commodity hardware.

\subsection{Security Analysis}
\label{protocol_props}

There are two key questions about the security of the proposal we need to answer. In the first place, how different is our Proof-of-Work function from the classical one used in Bitcoin. In the second place, how secure is the light bootstrapping in comparison with full validation? 

\subsubsection{Proof-of-Work Equivalence}

We want to prove our RollerPow function could be used instead of BitcoinPow. Below is the BitcoinPow function from~\cite{garay2015bitcoin} with respect to our definition of $x$ argument and the only hashing function $hash$ to be used.

\begin{algorithm}[H]
\caption{The Bitcoin's proof of work function, parametrized by \(q, D\). The input is \((x; \mathcal{C})\).}
\label{algo_bitcoin_pow}
\begin{algorithmic}[1]
\Function{BitcoinPow}{x,C}
\State $ \langle a_{\tau}, a_{\mathcal{S}}, \tau \rangle \gets x $
\If{$\mathcal{C} = \varepsilon$}
	\State $s \gets 0$
\Else
	\State $ \langle s^\prime, a_{\tau}^\prime, a_{\mathcal{S}}^\prime, ctr^\prime \rangle \gets last(\mathcal{C}) $
	\State $s \gets hash(ctr^\prime, hash(s^\prime, a_{\tau}^\prime, a_{\mathcal{S}}^\prime)) $
\EndIf
\State $ctr \gets 1 $
\State $B \gets \varepsilon$
\State $h \gets hash(s, a_{\tau}, a_{\mathcal{S}}) $

\While{$(ctr \leq q)$}
\If{$hash(ctr,h) < D$}
\State $B \gets \langle s, x, ctr \rangle$
\State \textbf{break}
\EndIf
\State $ctr \gets ctr + 1$
\EndWhile
\State $\mathcal{C} \gets \mathcal{C}B$
\State \Return $\mathcal{C}$
\EndFunction
\end{algorithmic}
\end{algorithm}
	
 For equivalence we prove that external party cannot distinguish by observing a fact of successful block generation whether it is originated from RollerPow or BitcoinPow. We construct following indistinguishability experiment \(POW_{\mathcal{A}}\):

\begin{itemize}
\item There are two honest miners and an adversary. Miners are tossing a fair coin before the experiment. Based on an uniform coin tossing outcome \(b\), one of miners is trying to extend a Rollerchain-based blockchain, another is working on a Bitcoin-based blockchain. The adversary doesn't know about their jobs. Both miners share the same \(q\) and \(D\) values.
\item The adversary \(\mathcal{A}\) generates two valid blockchains, one is Rollerchain and another is Bitcoin, and send both chains to both miners.
\item Both miners are calling their Proof-of-Work functions for an appropriate chain~(ignoring another one). A miner sends \textit{success} if a new block has been successfully generated or \textit{failure} if not, to the adversary. The answer is to be sent not immediately, but just after fixed delay from getting a job \(T_{exp}\).
\item Adversary outputs \(b^\prime\) and succeeds if \(b = b^\prime\). We write \(POW_{\mathcal{A}} = 1\) if she succeeds, and \(POW_{\mathcal{A}} = 0\) otherwise.
\end{itemize}

\begin{assumption}
\label{assume_q}
It should be always possible to perform all the \(q\) iterations for both RollerPow and BitcoinPow functions within \(T_{exp}\) for any of the miners and for any input.
\end{assumption}

\begin{theorem}
For all the PPT adversary \(\mathcal{A}\), \(Pr[POW_{\mathcal{A}} = 1] = \frac{1}{2}\).
\end{theorem}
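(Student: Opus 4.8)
The approach is to argue that everything $\mathcal{A}$ observes in the experiment has exactly the same distribution whether the coin came up $b=0$ or $b=1$; once that is established, no distinguisher — let alone a PPT one — can beat an even guess, and in fact every distinguisher hits exactly $\tfrac12$.

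\emph{Step 1: identify $\mathcal{A}$'s view.} By the specification of $POW_\mathcal{A}$, the only data $\mathcal{A}$ receives from the two miners is, for each of them, one bit — \emph{success} or \emph{failure} — and it always arrives a fixed time $T_{exp}$ after the job was issued, irrespective of when the miner's \texttt{while} loop actually terminated (the answer is withheld until $T_{exp}$). By Assumption~\ref{assume_q} each miner can run all $q$ loop iterations within $T_{exp}$, so the timing is a constant and carries nothing about $b$. Hence $\mathcal{A}$'s view is exactly the pair $(o_1,o_2)\in\{\text{success},\text{failure}\}^2$ of outcomes.

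\emph{Step 2: the two Proof-of-Work functions succeed with the same probability.} Modelling $hash$ as a random oracle (as in the GKL analysis), \textsc{BitcoinPow} computes $h=hash(s,a_\tau,a_\mathcal{S})$ once and then, for $ctr=1,\dots,q$, tests $hash(ctr,h)<D$; these $q$ inputs are pairwise distinct (they differ in the first component) and distinct from the one producing $h$, so — assuming, as usual, that the protocol's various uses of $hash$ are domain-separated — they yield $q$ fresh, independent, uniform values, each below $D$ with the same probability $p$ (a function of $D$ and the output range of $hash$). Thus \textsc{BitcoinPow} succeeds with probability $\rho:=1-(1-p)^q$. In \textsc{RollerPow} the value $h$ is recomputed inside the loop as $h=hash(s,a_t,a_\tau,a_\mathcal{S})$, with $a_t$ depending on $ctr$ through $GenTicket$; yet the acceptance test is still $hash(ctr,h)<D$ on $q$ inputs whose first component $ctr$ runs over $1,\dots,q$, hence again $q$ pairwise-distinct, fresh, independent, uniform draws, so \textsc{RollerPow} also succeeds with probability exactly $\rho$. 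Here we use that the honest miner is correctly configured — a $pk$ for which $ChooseSnapshots$ returns $k$ pairwise-distinct snapshots on a chain long enough for this — so that $GenTicket$ produces a ticket on every iteration rather than the miner aborting; and the extra hashing \textsc{RollerPow} performs per iteration is immaterial since, by Assumption~\ref{assume_q}, all $q$ iterations run within $T_{exp}$ regardless.

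\emph{Step 3: assemble.} Conditioned on $b$, the two miners work on the two adversarially chosen chains (which are of different formats, hence distinct) using independent coins, so $(o_1,o_2)$ is a pair of independent Bernoulli$(\rho)$ random variables no matter which miner was assigned \textsc{RollerPow}; therefore the law $\mathcal{V}$ of the view is the same for $b=0$ and $b=1$. Writing $V$ for a sample from $\mathcal{V}$, for any adversary $\mathcal{A}$,
\[
\Pr[POW_\mathcal{A}=1]=\tfrac12\Pr[\mathcal{A}(V)=0\mid b=0]+\tfrac12\Pr[\mathcal{A}(V)=1\mid b=1],
\]
which, since the two conditional laws of $V$ coincide and $\mathcal{A}$ outputs a bit, equals $\tfrac12\big(\Pr[\mathcal{A}(V)=0]+\Pr[\mathcal{A}(V)=1]\big)=\tfrac12$. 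The main obstacle is Step~2: seeing that \textsc{RollerPow}'s per-iteration rebuilding of the ticket — hence of $a_t$ and $h$ — does not perturb its success probability relative to \textsc{BitcoinPow}. The crux is that the loop counter $ctr$ itself feeds the final hash, so in both functions the $q$ acceptance tests are queries on $q$ syntactically distinct strings, and in the random-oracle model distinct inputs give i.i.d.\ uniform outputs; consequently ``at least one of $q$ tests passes'' has a distribution depending only on $q$ and $D$, and not at all on how the remaining hash arguments were generated.
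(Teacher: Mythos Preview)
Your proof is correct and follows the same line as the paper's: both functions reach the $hash(\cdot)<D$ test on each of the $q$ iterations, so under the random-oracle model the success/failure bit has the same law for \textsc{RollerPow} and \textsc{BitcoinPow}, and with the fixed delay $T_{exp}$ the adversary's view is independent of $b$. Your treatment is in fact more careful than the paper's, which states the per-miner success probability as $\frac{Dq}{2^\mu}$ rather than your $\rho=1-(1-D/2^\mu)^q$; your formula is the correct one, the paper's being only a first-order approximation.
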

\begin{proof}
For both Proof-of-Work functions \(hash(\cdot) < D\) check is reached on each call. Let \(\{0,1\}^\mu\) be the range of \(hash(\cdot)\) output, then both miners achieve \textit{success} with the same probability \(p = \frac{D \cdot q}{2^\mu}\). As the adversary does not have an auxiliary information on which result was generated faster~(it is probably from a Bitcoin miner), it cannot distinguish with probability above the random guessing.
\end{proof}

Note that without the Assumption~\ref{assume_q} an adversary can try to generate so big state it is not possible anymore for a RollerPow miner to call \textit{GenTicket} \(q\) times. Then the miner calls \textit{GenTicket} \(g < q\) times, her probability to succeed becomes \(p = \frac{D \cdot g}{2^\mu} \) making possible for the adversary to improve her chance above $\frac{1}{2}$.

\begin{corollary}
\label{cor_gkl}
Assume the GKL~\cite{garay2015bitcoin} environment described in the section~\ref{gkl_general} where miners are running whether \textit{RollerPow} or \textit{BitcoinPow} function, and each miner can perform $q$ iterations for any of both Proof-of-Work functions. Also, the environment $\mathcal{Z}$ does not know which Proof-of-Work function actually being using~(say, messages are encrypted with a secret key the miners share). Then the environment can not distinguish which Proof-of-Work function is used with a probability above the random guessing. That means all the analysis in the original work~(which is based on events probabilities analysis) could be reused. 
\end{corollary}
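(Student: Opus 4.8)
The plan is to lift the single-round indistinguishability established in the Theorem to the full multi-round GKL execution by a standard hybrid argument over rounds. First I would observe that the only place where the two protocols differ is inside the Proof-of-Work function: \textit{RollerPow} additionally invokes \textit{GenTicket} and writes a ticket $t$ (and the seed $s_t$) into the block, whereas \textit{BitcoinPow} does not. Under Assumption~\ref{assume_q} each miner completes all $q$ hash queries in either case, and by the Theorem the event ``miner succeeds in extending its chain this round'' has the \emph{same} probability $p = \frac{D\cdot q}{2^\mu}$ in both worlds, independently across rounds and miners. Since the corollary hypothesis stipulates that $\mathcal{Z}$ sees only encrypted messages (so the ticket payload $t$, the seed, and the header layout are hidden), the only information $\mathcal{Z}$ can extract about which function is running is the \emph{sequence of success/failure events} — i.e., who found a block in which round — together with the adversary's own view, which contains nothing that depends on the ticket contents it cannot decrypt.

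Next I would make the hybrid precise. Define hybrid $H_j$ to be the execution in which miners use \textit{RollerPow} in the first $j$ rounds and \textit{BitcoinPow} thereafter; $H_0$ is the all-Bitcoin execution and $H_{t_{\max}}$ the all-Rollerchain one, where $t_{\max}$ is the (polynomial) number of rounds $\mathcal{Z}$ runs. For the step from $H_j$ to $H_{j+1}$ I would invoke the Theorem: conditioned on identical state at the start of round $j+1$, the distribution of the round-$(j+1)$ outcome (which honest parties succeed) is identical, and because the block contents that differ are unobservable to $\mathcal{Z}$ by the encryption assumption, the joint distribution of $\mathcal{Z}$'s entire view is unchanged. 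Summing, $\mathsf{VIEW}$ in $H_0$ and in $H_{t_{\max}}$ are statistically indistinguishable, so $\mathcal{Z}$'s guessing advantage is $0$ (up to the negligible collision probability of $hash$). The final sentence of the corollary then follows: every probabilistic event used in the GKL analysis — the typical-execution bounds, chain-growth, common-prefix, chain-quality events — is a function of exactly this view, hence has identical probability under \textit{RollerPow}, so the original proofs transfer verbatim.

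The main obstacle I anticipate is justifying that nothing protocol-relevant leaks through the parts of the block that \emph{are} observable. The seed chaining $s \gets hash(ctr', hash(s', t', a_\tau', a_{\mathcal{S}}'))$ in \textit{RollerPow} folds $t'$ into the next round's starting seed, so to argue round-to-round independence one must either appeal to the random-oracle/uniform-output property of $hash$ (so that $s$ is uniform regardless of whether a ticket was mixed in) or explicitly include $s$ among the encrypted fields; I would take the former route, noting it is already the modelling assumption in the GKL framework and in this paper's Notation section. A secondary subtlety is the uniqueness requirement on the $\mathcal{S}_{pk}$ snapshots: if a (corrupted) miner's public key yields colliding snapshots it cannot mine at all, which would perturb the success probability. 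I would handle this by noting it only ever \emph{removes} adversarial mining power, so it can only help the honest majority and does not affect the indistinguishability claim — or, more cleanly, by restricting attention (as the experiment in the Theorem implicitly does) to miners whose keys are valid, which is without loss of generality for the backbone-property statements.
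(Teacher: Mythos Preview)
Your proposal is correct, but it is far more elaborate than the paper's own treatment. In the paper the corollary is stated without a separate proof environment: it is offered as an immediate consequence of the preceding theorem, the implicit reasoning being that since the per-iteration success probability $p = D\cdot q/2^\mu$ is identical in both worlds and the environment cannot inspect the block internals (by the encryption hypothesis baked into the corollary's statement), the event probabilities on which the GKL analysis rests are unchanged, so that analysis ``could be reused''. No hybrid over rounds, no discussion of seed chaining, no treatment of the snapshot-uniqueness side condition appears.

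Your round-by-round hybrid, together with the explicit appeal to the uniform-output property of $hash$ to neutralise the fact that $t'$ is folded into the next seed, and your handling of the $\mathcal{S}_{pk}$-collision corner case, supply the rigour the paper omits. What your approach buys is an actual argument that the joint distribution of $\mathcal{Z}$'s view is identical across the two executions; what the paper's approach buys is brevity, essentially letting the corollary's own wording (``based on events probabilities analysis'') stand in for the proof. Both reach the same conclusion, and nothing in your proposal is wrong or unnecessary---it simply does work the paper declines to do.
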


\subsubsection{Bootstrapping Security}

Consider a new node is connecting to the network. Assume it connects to a large number of nodes, so it sees last \(n\) state snapshots and full blocks. The node is going to download state snapshot from \(n\) blocks ago and then apply the full blocks to it. How secure is the bootstrapping scenario in comparison with a classic one where a full node is downloading and applying blocks since genesis? We introduce notions of full verifier and light verifier and then prove the divergence in bootstrapping results for them is going down exponentially with \(n\).

\begin{definition}A full verifier is a node which downloads all the full blocks since genesis. \end{definition}

\begin{definition}A light verifier is a node downloading only block headers since genesis and then state snapshot from \(n\) blocks ago and last \(n\) full blocks in order to apply the blocks to the state. \end{definition}

\begin{theorem}
Considering an authenticated dictionary second-preimage secure, a full verifier is getting the same state for the same chain as the lite verifier with probability \(1 - negl(n)\).
\end{theorem}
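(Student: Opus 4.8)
\emph{Proof proposal.} The plan is to reduce any discrepancy between the two verifiers to a second-preimage break of the authenticated dictionary, exploiting the fact that every block header commits, via $root(\cdot)$, to both the state snapshot and the transaction set at its height. First I would dispose of the degenerate case: if the chain has length $m \le n$, the light verifier has no snapshot ``$n$ blocks ago'' and, by the light bootstrap definition, simply replays all blocks from the genesis state, performing exactly the full verifier's computation, so the two states coincide with probability $1$. Hence assume $m > n$ and let $\mathcal{S}_1,\dots,\mathcal{S}_m$ be the sequence of states the full verifier obtains, with $\mathcal{S}_i = \mathcal{S}_{i-1}\diamond\tau_i$ and output $\mathcal{S}_m$. (If the full verifier rejects the chain the statement is vacuous, so I restrict to chains on which the predicate $V$ of Table~\ref{operations} holds.)

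The first substantive step is to observe that on such a chain the header fields pin down the recent history. From the \textbf{root authenticator} and \textbf{set-uniqueness} properties, $checkRoot(\mathcal{D},a)$ holds iff $a = root(\mathcal{D})$; since $V$ ran $ValidateBlock$ successfully at every height, the header at height $j$ contains exactly $a_{\mathcal{S}_j} = root(\mathcal{S}_j)$ and $a_{\tau_j} = root(\tau_j)$, and the light verifier sees the same headers (``the same chain''), hence knows these values. Next I would record the elementary fact that, by definition of $\diamond$, the full verifier's state is the left fold $\mathcal{S}_m = \mathcal{S}_1 \diamond \tau_2 \diamond \cdots \diamond \tau_m$, which therefore factors as $\mathcal{S}_m = \mathcal{S}_{m-n}\diamond\tau_{m-n+1}\diamond\cdots\diamond\tau_m$. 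Thus, if the light verifier reconstructs precisely $\mathcal{S}_{m-n}$ and $\tau_{m-n+1},\dots,\tau_m$, it recomputes precisely $\mathcal{S}_m$.

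The heart of the argument is then a reduction. Let $\tilde{\mathcal{S}}$ and $\tilde\tau_{m-n+1},\dots,\tilde\tau_m$ be the snapshot and full blocks the light verifier obtains from (possibly adversarial) peers; it accepts them only after checking them against the header commitments, i.e. $checkRoot(\tilde{\mathcal{S}},a_{\mathcal{S}_{m-n}})$ and $checkRoot(\tilde\tau_i,a_{\tau_i})$ for each $i$ (and on the last $n$ blocks $ValidateBlock$, which subsumes the latter, with each $\mathcal{S}_{m-n+1},\dots$ computed on the fly via $\diamond$). Define the bad event $E$ to be that the light verifier accepts this data although $(\tilde{\mathcal{S}},\tilde\tau_{m-n+1},\dots,\tilde\tau_m) \ne (\mathcal{S}_{m-n},\tau_{m-n+1},\dots,\tau_m)$. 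On $E$ some accepted component $\tilde X$ differs from the true component $X$ yet satisfies $root(\tilde X) = root(X)$; since $X$ is a fixed function of the given chain, an adversary causing $E$ yields, by outputting $(X,\tilde X)$, a second-preimage for $root(\cdot)$. A union bound over the $n+1$ commitments gives $\Pr[E] \le (n+1)\varepsilon$, negligible, and on the complement of $E$ the light verifier reconstructs $\mathcal{S}_{m-n}$ and the $\tau_i$ exactly and hence outputs $\mathcal{S}_m$.

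The main obstacle I anticipate is bookkeeping around what is ``fixed'' versus adversarial in the reduction — the chain, and thus $\mathcal{S}_{m-n}$ and the $\tau_i$, must be treated as an input to the second-preimage game, with the peer data supplied afterwards — together with making precise the role of $n$ in the bound. Within the ``same chain'' conditioning the failure probability above is negligible in the security parameter, with $n$ entering only through the polynomial union-bound factor; the genuinely $n$-exponential part of the informal claim is the prior guarantee that an honest new node and a full node lock onto the same header chain in the first place, which is exactly the common-prefix/chain-quality guarantee inherited from the GKL analysis via Corollary~\ref{cor_gkl}, and I would invoke it to upgrade ``for the same chain'' to ``with probability $1-negl(n)$'' for the bootstrap as a whole. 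I would also note that the \textbf{non-membership} and \textbf{efficiency} properties of the dictionary are not consumed here (they matter for ticket verification, not bootstrap), so second-preimage security is indeed the single hypothesis the proof uses.
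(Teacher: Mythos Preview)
Your argument is sound but is structured quite differently from the paper's own proof. The paper's proof is much terser: it spends one sentence on second-preimage security, asserting only that the downloaded snapshot at height $h-n$ coincides with the full verifier's $\mathcal{S}_{h-n}$, and then makes the common-prefix bound the centerpiece. Concretely, the paper's bad event is not ``a peer supplies a colliding dictionary'' but ``a fork deeper than $n$ occurs,'' in which case the light verifier cannot obtain a usable snapshot or the needed blocks at all; it then invokes Theorem~9 of~\cite{garay2015bitcoin} (licensed by Corollary~\ref{cor_gkl}) to bound that event by $e^{-\Omega(\delta^{3} n)}$, and \emph{that} is where the $negl(n)$ in the statement comes from.

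Your decomposition is in some respects more careful: you make the second-preimage reduction explicit, extend it to the $n$ downloaded transaction sets (which the paper does not mention separately), and take a union bound. But your headline bound $(n+1)\varepsilon$ is negligible in the security parameter with $n$ only a polynomial factor, and you correctly flag this, pushing the genuinely $n$-exponential part into a closing remark about common prefix. The paper does the reverse: common prefix is the main estimate and second-preimage is a one-liner. Your reading of what common prefix buys (``both nodes lock onto the same header chain'') also differs from the paper's (``no deep reorg renders the snapshot and last $n$ blocks unavailable to the light verifier''); under the theorem's ``same chain'' hypothesis your framing is arguably cleaner, but you should be aware that the paper locates the $1-negl(n)$ precisely in the GKL common-prefix bound, not in the collision bound you develop.
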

\begin{proof}
Assume the chain has length \(h\). If \(h \le n\) then the both verifiers are applying the same full blocks to the same genesis state, and so get the same results with probability \(1\). Consider the case \(h > n \). As authenticated dictionary is second-preimage secure, a state a light verifier starts from is the same for both verifiers. By applying \(n\) full blocks to it both verifiers is getting the same results if no fork deeper than \(n\) occurs~(it is impossible to get a state and blocks from deeper than \(n\) blocks ago for a light verifier so it will fail to construct the result). Assuming the probability of such a fork is \(p_f\), verifiers are getting the same results with probability \(1 - p_f\). 
Rollback for \(n\) blocks means the common prefix property got broken. By Theorem 9 from~\cite{garay2015bitcoin}~(we can use it due to the Corollary~\ref{cor_gkl} result) the probability of that is at most \(e^{-\Omega(\delta^3 \cdot n)}\), \(\delta \in (0,1) \). Thus \( p_f \le e^{-\Omega(\delta^3 \cdot n)}\) is \(negl(n)\) and both verifiers are getting the same results with probability \(1 - negl(n)\).
\end{proof}

However, a new node can start with a newer state than \(n\) blocks ago. But in case of a fork deeper than the state she will need to download an older state.

\subsection{The \(\varepsilon\)-consensus Attacks}

To perform an \(\varepsilon\)-consensus attack~\cite{luu2015demystifying} a miner includes heavy transactions into the block generated by herself (for free) in order to force other miners to spend some time to verify a block or to skip validation. For our protocol such attacks could be more harmful as a miner needs to apply at least \(k\) blocks, instead of just one. However, a mining software could perform mining states updates in advance, in this case \(\varepsilon\)-consensus attacks are no more harmful than in other systems.

\subsection{Archiving Guarantees In Finite-Size Networks}

Previously we stated that a very large network is storing last \(n\) state snapshots and full blocks. But real networks are limited in size, and the size could be not so big. Consider Alice is joining a network holding a blockchain at height \(h_c\) and sees \(p\) peers, all of them are rational miners. How many versions of state, and also full blocks could Alice find by asking her peers? 

For simplicity, we ignore the fact that the \textit{ChooseSnapshots} values unique for a peer. Then \(p\) peers have \(p \cdot k\) (possibly duplicate) values over \(n\) integer values. By using order statistics the expected minimum value is \( h_{min} = (h_c - n) + \frac{n}{p\cdot k+1} = h_c - \frac{p\cdot k\cdot n}{p\cdot k+1}\). Alice can download state snapshots and blocks since \(h_{min}\) on average from at least one peer.

\section{Related Work}

\subsection{The Permacoin Cryptocurrency}

The key idea behind Permacoin~\cite{miller2014permacoin} is to make the mining process dependent upon the storage resources rather than the computational capabilities. In order to achieve the condition that every miner stores a random subset of a known static dataset, Permacoin requires that the dataset is generated by a trusted dealer. 

The trusted dealer also announces the root hash value of a Merkle tree~\cite{merkle1987digital} built on the top of the dataset. A miner iterates over some random nonce value to find chunks whose indexes are dependent on the value. A ticket contains the nonce value and the chunks with Merkle paths for them being attached. 

The winning ticket gives the right to generate the next block  to be included into the blockchain.

During careful analysis we have found few possible threats and open questions in the protocol:

\begin{itemize}

\item The Permacoin consensus requires a trusted dealer to encode a huge dataset and to build a Merkle tree on top of it. We would like to eliminate the eliminate the need for a trusted dealer completely.

\item The information a miner stores is static because it is a subset of a static dataset, thus it is possible to put it into a dynamic or static RAM and then connect specialized hardware to the memory. With such a scheme companies could get the same advantage over individual miners as in Bitcoin with the same degree of centralization we would like to avoid.

\item Identifiers of data segments to store depend on a miner's public key. Possible identifier collisions are not prohibited by the protocol. So a miner could iterate over private keys in order to find a public key maximizing a number of the collisions (see ``Setup'' formula in ``A simple POR lottery'' figure of the Permacoin paper~\cite{miller2014permacoin}). This could give even more advantage to a company with vast computational resources over individual miners.

\end{itemize}

\subsection{Ethereum}

Ethereum~\cite{ethyp} has state representation as an authenticated data structure being fixed by the protocol with a root hash to be included into a block. However, as there is no incentive to store past state snapshots, the state proof in a block header is useful mostly to help light clients to get the state elements from the full nodes along with the authenticity proofs and does not solve the rationale problem stated in the Section~\ref{rationale}.

Ethereum is contending with specialized hardware by using Ethash algorithm, which is proposing to use pseudo-random dataset generated from blockheaders in the past in a Proof-of-Work mining process. The usefulness of EthHash is the same as of the Bitcoin scheme. The disadvantage of the algorithm is a heavy validation process.

\subsection{Cryptonite}

Cryptonite~\cite{cryptonite} utilizes mini-blockchain scheme~\cite{minibc} to prune full blocks before a constant numbers of last ones, thus having a \textit{proof chain} of length \(h_{pc}\) (the chain of block headers only), a state for a height \(h_{pc}\) and full blocks for heights from \(h_{pc}+1\) till current. It is not clear from the paper~\cite{minibc} how the protocol enforces this scheme.

\section{Further Work}

We highlight some unsolved problems and questions to be answered before a deploy to a real-life system.

\subsection{Compress Block Headers Storage}

It is probably possible to reduce number of block headers stored at a full node by using Proofs-of-Proof-of-Work with sublinear complexity technique from~\cite{kiayiasproofs}. However, the technique should be developed further with respect to dynamic nature of the difficulty parameter.

\subsection{Protection Against Specialized Hardware}

As a set of states needed for mining is dynamic, it is probably more protected from specialized hardware than Permacoin. For better protection asymmetric Proof-of-Work based on the Generalized Birthday problem~\cite{biryukov2016asymmetric} could be incorporated into our protocol. Protection against specialized hardware is needed to prevent a situation when only few parties hold state snapshots and full blocks and so should be considered carefully.

\subsection{A State Exchange Protocol}

Consider Alice joining a network. She chooses a state snapshot at height \(h_d\) to download from her peers. The problem is, if a new block arrives during downloading the snapshot, her peers are to replace the snapshot with another one of height \(h_d+1\). To avoid this problem Alice needs to ask her peers to store the snapshot even if it is not needed for mining anymore. Alice can propose a reward to her peers for doing this, thus some fair protocol is needed.

\subsection{Implementation Parameters and Details}

Archiving parameters \(n\) and \(k\) should be carefully chosen in practice. Also, an authenticated dictionary implementation must have efficient batch updates in order to minimize computational overhead for re-calculating authenticating value during a block application.

\section{Conclusion}
\label{conclusion}

We argue Bitcoin as well as other blockchain systems are secure in some aspects because of altruistic behavior of participants, and the cost of the altruism is going up with time. As we cannot expect such a status quo to be viable in the long term, the question of a blockchain system safe in the presence of rational super-majority arises. In particular we need to protect blockchain systems against a long-term threat of full blocks pruning. A rational full node prunes its full blocks, but if all the nodes are rational then a new node can not bootstrap in a trustless manner. 

We have presented a modification for a Proof-of-Work blockchain system to repurpose the work securing in order to store a fixed number of state snapshots and full blocks collectively. Or result is generic so could be applied to many blockchain systems with different transactional semantics. We have carefully described changes needed to be applied to transactional layer of a blockchain system and a modified Proof-of-Work scheme. 

Together this twofold contribution, the Rollerchain framework, allows full nodes to be rational without a security loss for a network. A rational full node implementation can reduce its storage requirements by three orders of magnitude in comparison with an altruistic one, and a new node still can bootstrap in a trustless, quick and safe way.

\ifCLASSOPTIONcompsoc
  \section*{Acknowledgments}
\else
  \section*{Acknowledgment}
\fi

The authors would like to thank Andrew Miller for a discussion on Permacoin, Roman Oliynykov and Bill White for proofreading.

\bibliographystyle{IEEEtran}
\bibliography{sources.bib}

\begin{thebibliography}{10}
\providecommand{\url}[1]{#1}
\csname url@samestyle\endcsname
\providecommand{\newblock}{\relax}
\providecommand{\bibinfo}[2]{#2}
\providecommand{\BIBentrySTDinterwordspacing}{\spaceskip=0pt\relax}
\providecommand{\BIBentryALTinterwordstretchfactor}{4}
\providecommand{\BIBentryALTinterwordspacing}{\spaceskip=\fontdimen2\font plus
\BIBentryALTinterwordstretchfactor\fontdimen3\font minus
  \fontdimen4\font\relax}
\providecommand{\BIBforeignlanguage}[2]{{%
\expandafter\ifx\csname l@#1\endcsname\relax
\typeout{** WARNING: IEEEtran.bst: No hyphenation pattern has been}%
\typeout{** loaded for the language `#1'. Using the pattern for}%
\typeout{** the default language instead.}%
\else
\language=\csname l@#1\endcsname
\fi
#2}}
\providecommand{\BIBdecl}{\relax}
\BIBdecl

\bibitem{Nakamoto2008}
\BIBentryALTinterwordspacing
S.~Nakamoto, ``{Bitcoin: A Peer-to-Peer Electronic Cash System},'' pp. 1--9,
  2008. [Online]. Available: \url{https://bitcoin.org/bitcoin.pdf}
\BIBentrySTDinterwordspacing

\bibitem{garay2015bitcoin}
J.~Garay, A.~Kiayias, and N.~Leonardos, ``The bitcoin backbone protocol:
  Analysis and applications,'' in \emph{Advances in Cryptology-EUROCRYPT
  2015}.\hskip 1em plus 0.5em minus 0.4em\relax Springer, 2015, pp. 281--310.

\bibitem{milleranonymous}
\BIBentryALTinterwordspacing
A.~Miller and J.~J. LaViola~Jr, ``Anonymous byzantine consensus from
  moderately-hard puzzles: A model for bitcoin.'' [Online]. Available:
  \url{https://socrates1024.s3.amazonaws.com/consensus.pdf}
\BIBentrySTDinterwordspacing

\bibitem{cromanscaling}
\BIBentryALTinterwordspacing
K.~Croman, C.~Decker, I.~Eyal, A.~E. Gencer, A.~Juels, A.~Kosba, A.~Miller,
  P.~Saxena, E.~Shi, and E.~G{\"u}n, ``On scaling decentralized blockchains.''
  [Online]. Available: \url{http://fc16.ifca.ai/bitcoin/papers/CDE+16.pdf}
\BIBentrySTDinterwordspacing

\bibitem{douceur2002sybil}
J.~R. Douceur, ``The sybil attack,'' in \emph{Peer-to-peer Systems}.\hskip 1em
  plus 0.5em minus 0.4em\relax Springer, 2002, pp. 251--260.

\bibitem{miller2014permacoin}
A.~Miller, A.~Juels, E.~Shi, B.~Parno, and J.~Katz, ``Permacoin: Repurposing
  bitcoin work for data preservation,'' in \emph{Security and Privacy (SP),
  2014 IEEE Symposium on}.\hskip 1em plus 0.5em minus 0.4em\relax IEEE, 2014,
  pp. 475--490.

\bibitem{bitcoind}
\BIBentryALTinterwordspacing
``{Bitcoin Core Code Repository}.'' [Online]. Available:
  \url{https://github.com/bitcoin/bitcoin/}
\BIBentrySTDinterwordspacing

\bibitem{hardin1968commons}
\BIBentryALTinterwordspacing
G.~Hardin, ``The tragedy of the commons,'' \emph{Science}, vol. 162, pp.
  1243--1248, December 1968. [Online]. Available:
  \url{http://www.sciencemag.org/cgi/reprint/162/3859/1243.pdf}
\BIBentrySTDinterwordspacing

\bibitem{snapshot}
\BIBentryALTinterwordspacing
``{Bitcoin Blockchain Data Torrent}.'' [Online]. Available:
  \url{https://bitcointalk.org/index.php?topic=145386.0}
\BIBentrySTDinterwordspacing

\bibitem{Canetti:2001}
\BIBentryALTinterwordspacing
R.~Canetti, ``Universally composable security: A new paradigm for cryptographic
  protocols,'' pp. 136--, 2001. [Online]. Available:
  \url{http://dl.acm.org/citation.cfm?id=874063.875553}
\BIBentrySTDinterwordspacing

\bibitem{utxoset}
\BIBentryALTinterwordspacing
A.~Miller, ``Storing utxos in a balanced merkle tree,'' aug 2012. [Online].
  Available: \url{https://bitcointalk.org/index.php?topic=101734.0}
\BIBentrySTDinterwordspacing

\bibitem{ethyp}
\BIBentryALTinterwordspacing
``{Ethereum: A Secure Decentralized Generalized Transaction Ledger}.''
  [Online]. Available: \url{http://gavwood.com/Paper.pdf}
\BIBentrySTDinterwordspacing

\bibitem{nissim1998certificate}
K.~Nissim and M.~Naor, ``Certificate revocation and certificate update.'' in
  \emph{USENIX Security}.\hskip 1em plus 0.5em minus 0.4em\relax Citeseer,
  1998.

\bibitem{kocher1998certificate}
P.~C. Kocher, ``On certificate revocation and validation,'' in
  \emph{International Conference on Financial Cryptography}.\hskip 1em plus
  0.5em minus 0.4em\relax Springer, 1998, pp. 172--177.

\bibitem{sparsemerkle}
R.~Dahlberg, T.~Pulls, and R.~Peeters, ``Efficient sparse merkle trees: Caching
  strategies and secure (non-)membership proofs,'' Cryptology ePrint Archive,
  Report 2016/683, 2016, \url{http://eprint.iacr.org/2016/683}.

\bibitem{crosby2011authenticated}
S.~A. Crosby and D.~S. Wallach, ``Authenticated dictionaries: Real-world costs
  and trade-offs,'' \emph{ACM Transactions on Information and System Security
  (TISSEC)}, vol.~14, no.~2, p.~17, 2011.

\bibitem{anagnostopoulos2001persistent}
A.~Anagnostopoulos, M.~T. Goodrich, and R.~Tamassia, ``Persistent authenticated
  dictionaries and their applications,'' in \emph{International Conference on
  Information Security}.\hskip 1em plus 0.5em minus 0.4em\relax Springer, 2001,
  pp. 379--393.

\bibitem{luu2015demystifying}
L.~Luu, J.~Teutsch, R.~Kulkarni, and P.~Saxena, ``Demystifying incentives in
  the consensus computer,'' in \emph{Proceedings of the 22nd ACM SIGSAC
  Conference on Computer and Communications Security}.\hskip 1em plus 0.5em
  minus 0.4em\relax ACM, 2015, pp. 706--719.

\bibitem{merkle1987digital}
R.~C. Merkle, ``A digital signature based on a conventional encryption
  function,'' in \emph{Advances in Cryptology—CRYPTO’87}.\hskip 1em plus
  0.5em minus 0.4em\relax Springer, 1987, pp. 369--378.

\bibitem{cryptonite}
\BIBentryALTinterwordspacing
``{The Cryptonite Project Homepage}.'' [Online]. Available:
  \url{http://cryptonite.info}
\BIBentrySTDinterwordspacing

\bibitem{minibc}
\BIBentryALTinterwordspacing
``{The Mini-Blockchain Scheme}.'' [Online]. Available:
  \url{http://cryptonite.info/files/mbc-scheme-rev2.pdf}
\BIBentrySTDinterwordspacing

\bibitem{kiayiasproofs}
A.~Kiayias, N.~Lamprou, and A.-P. Stouka, ``Proofs of proofs of work with
  sublinear complexity.''

\bibitem{biryukov2016asymmetric}
A.~Biryukov and D.~Khovratovich, ``Asymmetric proof-of-work based on the
  generalized birthday problem,'' \emph{Proceedings of NDSS 2016}, p.~13, 2016.

\end{thebibliography}

\newpage
\onecolumn

\end{document}